\newtheorem{remark}{Remark}
\newtheorem*{theorem*}{Theorem}
\newtheorem*{definition*}{Definition}
\journal{Journal of \LaTeX\ Templates}
\begin{document}

\begin{frontmatter}

\title{Convergence Analysis of Consensus-ADMM for General QCQP}
\tnotetext[mytitlenote]{The work of H. Huang is supported by the Graduate School CE within the Centre for Computational Engineering at Technische Universität Darmstadt.}


\author[mymainaddress]{Huiping Huang} 
\ead{h.huang@spg.tu-darmstadt.de}

\author[mysecondaryaddress]{Hing Cheung So}
\ead{hcso@ee.cityu.edu.hk}

\author[mymainaddress]{Abdelhak M. Zoubir}
\ead{zoubir@spg.tu-darmstadt.de}

\address[mymainaddress]{Department of Electrical Engineering and Information Technology, Technische Universität Darmstadt, Darmstadt, Germany}
\address[mysecondaryaddress]{Department of Electrical Engineering, City University of Hong Kong, Hong Kong, China}

\begin{abstract}
We analyze the convergence properties of the consensus-alternating direction method of multipliers (ADMM) for solving general quadratically constrained quadratic programs. We prove that the augmented Lagrangian function value is monotonically non-increasing as long as the augmented Lagrangian parameter is chosen to be sufficiently large. Simulation results show that the augmented Lagrangian function is bounded from below when the matrix in the quadratic term of the objective function is positive definite. In such a case, the consensus-ADMM is convergent.
\end{abstract}

\begin{keyword}
Augmented Lagrangian function, Consensus-ADMM, quadratically constrained quadratic program
\end{keyword}

\end{frontmatter}

\linenumbers

\section{Introduction}
\label{introduction_section}
An attractive algorithm, called consensus-alternating direction method of multipliers (ADMM), has been recently proposed in \cite{Huang2016}, for solving general quadratically constrained quadratic programs (QCQPs). Note that these problems could be non-convex and thus NP-hard, including several special cases in real-world applications, see e.g. \cite{Mehanna2015, Park2017, Mehanna2013, Liu2018, Hamza2019, Zheng2020, Cheng2021, Dan2022, Huang2022}. The consensus-ADMM, primarily enjoys the following two advantages: i) it formulates the general QCQP in such a way that each subproblem is a QCQP with only one constraint, referred to as QCQP-1, which can be effectively solved irrespective of (non)convexity; and ii) memory efficient implementation and parallel/distributed processing are possible, which greatly save the effort of solving general QCQPs, especially when the dimensions of the resulting matrices and/or the number of constraints are large.

Despite the effectiveness of consensus-ADMM, only a weak convergence result was presented in \cite{Huang2016}. That is, the authors claim that the consensus-ADMM converges to a Karush-Kuhn-Tucker (KKT) stationary point, under the key assumptions that the limit points do exist, and that the difference between two successive iterations converges to zero, see Theorem 1 in \cite{Huang2016}. Note that such assumptions are nonstandard and overly restrictive \cite{Hong2016}. In this paper, we analyze the convergence of the consensus-ADMM, and show that the augmented Lagrangian function is monotonically non-increasing as long as the Lagrangian parameter is larger than a certain value.

The remainder of the paper is organized as follows. The problem formulation and the consensus-ADMM are presented in Section \ref{problemstatement_section}. Convergence analyses are established in Section \ref{convergenceproperty_section}. Simulation results are shown in Section \ref{simulation_section}, while Section \ref{conclusion_section} concludes the paper.

\textit{Notation:} Bold-faced lower-case and upper-case letters denote vectors and matrices, respectively. ${\bf I}_{n}$ denotes the $n \times n$ identity matrix, and ${\bf 1}$ and ${\bf 0}$ are the all-one and all-zero vectors of appropriate size, respectively. Superscripts $\cdot^{\mathrm{T}}$, $\cdot^{\mathrm{H}}$ and $\cdot^{-1}$ stand for transpose, Hermitian transpose and inverse operators, respectively. $|\cdot|$ is the absolute operator. $\|\cdot\|_{2}$ denotes the $\ell_{2}$-norm of a vector. $\lambda_{\mathrm{max}}(\cdot)$ and $\lambda_{\mathrm{min}}(\cdot)$ denote the maximum and minimum eigenvalues, respectively. $\mathbb{C}$ is the set of complex numbers, and $\Re\{\cdot\}$ returns the real part of its input complex-valued variable. $\langle {\bf a} , {\bf b}\rangle = {\bf a}^{\mathrm{H}}{\bf b}$ is the inner product of vectors ${\bf a}$ and ${\bf b}$ of same sizes. $\max\{a,b\}$ returns the maximum between $a$ and $b$. ${\bf X} \succ 0$ means that ${\bf X}$ is positive definite. $\mathrm{E}\left\{ \cdot \right\}$ and $\mathrm{Var}\{ \cdot \}$ denote the expected value and variance of a random variable, respectively.

\section{Problem Statement and Consensus-ADMM}
\label{problemstatement_section}
General QCQP is an optimization problem that minimizes a quadratic function subject to quadratic inequality and equality constraints \cite{Luo2010}, which can be typically formulated as \cite{Huang2016}
\begin{subequations}
\label{general_QCQP_prob}
\begin{align}
\min_{{\bf x} \in \mathbb{C}^{n}} ~ & {\bf x}^{\mathrm{H}}{\bf A}_{0}{\bf x} - 2\Re\{{\bf b}_{0}^{\mathrm{H}}{\bf x}\} \\
\label{2_general_QCQP_prob}
\mathrm{s.t.} ~~~\! & {\bf x}^{\mathrm{H}}{\bf A}_{i}{\bf x} - 2\Re\{{\bf b}_{i}^{\mathrm{H}}{\bf x}\} \leq c_{i}, ~ \forall i = 1, 2, \cdots, m,
\end{align}
\end{subequations}
where the known matrices ${\bf A}_{i} \in \mathbb{C}^{n \times n}$ are assumed to be general Hermitian matrices (possibly indefinite), and ${\bf b}_{i} \in \mathbb{C}^{n}$, for all $i = 0, 1, \cdots, m$. Note that any quadratic equality constraint can always be reformulated as two inequality constraints of the form in (\ref{2_general_QCQP_prob}). Therefore, for simplicity, only inequality constraints are given in Problem (\ref{general_QCQP_prob}). The following standard steps \cite{Boyd2011} guide us to the consensus-ADMM for solving the general QCQP, i.e., Problem (\ref{general_QCQP_prob}).

Step i): We reformulate Problem (\ref{general_QCQP_prob}) by introducing auxiliary variables $\{{\bf z}_{i}\}_{i=1}^{m}$, and settling the original variable ${\bf x}$ and the auxiliary variables $\{{\bf z}_{i}\}_{i=1}^{m}$ in a separable manner, as
\begin{subequations}
\label{auxiliary_QCQP_prob}
\begin{align}
\min_{{\bf x}, \{{\bf z}_{i}\!\} } ~ & {\bf x}^{\mathrm{H}}{\bf A}_{0}{\bf x} - 2\Re\{{\bf b}_{0}^{\mathrm{H}}{\bf x}\} \\
\mathrm{s.t.} ~~~\! & {\bf z}_{i}^{\mathrm{H}}{\bf A}_{i}{\bf z}_{i} - 2\Re\{{\bf b}_{i}^{\mathrm{H}}{\bf z}_{i}\} \leq c_{i}, \\
& {\bf z}_{i} = {\bf x}, ~~~ \forall i = 1, 2, \cdots, m.
\end{align}
\end{subequations}

Step ii): We form the scaled-form augmented Lagrangian function according to Problem (\ref{auxiliary_QCQP_prob}), by dealing with the equality constraints therein, i.e., ${\bf z}_{i} = {\bf x}$, as
\begin{align}
\label{augLagFun}
\mathcal{L}({\bf x}, \{{\bf z}_{i}\}, \{{\bf u}_{i}\})  \triangleq  {\bf x}^{\mathrm{H}}{\bf A}_{0}{\bf x} - 2\Re\{{\bf b}_{0}^{\mathrm{H}}{\bf x}\} + \rho \! \sum_{i = 1}^{m} \left( \|{\bf z}_{i} - {\bf x} + {\bf u}_{i}\|_{2}^{2} - \|{\bf u}_{i}\|_{2}^{2} \right)
\end{align}
where $\rho > 0$ is the augmented Lagrangian parameter, and ${\bf u}_{i}$ is the scaled dual variable corresponding to the equality constraint ${\bf z}_{i} = {\bf x}$ in Problem (\ref{auxiliary_QCQP_prob}).

Step iii): The consensus-ADMM updating equations can be written down by separately solving $\{{\bf z}_{i}\}$ and ${\bf x}$, as
\begin{subequations}
\label{ADMM_k1_prob}
\begin{align}
\label{ADMM_z_k1}
{\bf z}_{i}^{(\! k+1 \!)}  & \! =  \left\{ \!\! \begin{array}{l}
\displaystyle\arg\min_{{\bf z}_{i}} ~ \|{\bf z}_{i} - {\bf x}^{(\! k \!)} + {\bf u}_{i}^{(\! k \!)}\|_{2}^{2} \\ 
~~~ \mathrm{s.t.} ~~~~~ {\bf z}_{i}^{\mathrm{H}}{\bf A}_{i}{\bf z}_{i} - 2\Re\{{\bf b}_{i}^{\mathrm{H}}{\bf z}_{i} \} \leq c_{i},
\end{array}
\right. \\
\label{ADMM_x_k1}
{\bf x}^{(\! k+1 \!)}   & \! =  ({\bf A}_{0} + m\rho {\bf I}_{n})^{-1} \! \left[ {\bf b}_{0} + \rho \sum_{i = 1}^{m}({\bf z}_{i}^{(\! k+1 \!)} \! + \! {\bf u}_{i}^{(\! k \!)}) \right],   \\
\label{ADMM_u_k1}
{\bf u}_{i}^{(\! k+1 \!)}  & \! =  {\bf u}_{i}^{(\! k \!)} + {\bf z}_{i}^{(\! k+1 \!)} - {\bf x}^{(\! k+1 \!)},
\end{align}
\end{subequations}
where the superscript $\cdot^{(\! k \!)}$ denotes the corresponding variable obtained at the $k$-th iteration. As has been mentioned in \cite{Huang2016}, we put the consensus-ADMM into this form due to the fact that each update of ${\bf z}_{i}$ is a QCQP-1, which can be computed optimally despite the fact that the matrices ${\bf A}_{i}$ may be indefinite. The readers are referred to \cite{Huang2016} for details of solving QCQP-1. Furthermore, in the form of (\ref{ADMM_k1_prob}), $\{ {\bf z}_{i} \}_{i = 1}^{m}$ can be updated in parallel, and the result of $({\bf A}_{0} + m\rho {\bf I}_{n})^{-1}$ can be cached to save computations in the subsequent iterations. On the other hand, it is worth pointing out that, different from \cite{Huang2016} in which the algorithm was presented by updating ${\bf x}$ before $\{{\bf z}_{i}\}$, we update $\{{\bf z}_{i}\}$ before ${\bf x}$. This shall be shown to be very important when analyzing the convergence of the algorithm.

\section{Convergence Analysis}
\label{convergenceproperty_section}
This section is devoted to a theorem showing the monotonicity of the augmented Lagrangian function. First of all, from (\ref{ADMM_u_k1}), we have $ {{\bf z}}_{i}^{(\! k+1 \!)} + {{\bf u}}_{i}^{(\! k \!)} = {{\bf u}}_{i}^{(\! k+1 \!)} + {\bf x}^{(\! k+1 \!)}$. Substituting it into (\ref{ADMM_x_k1}) yields
\begin{align}
\label{u_and_x}
\rho \sum_{i = 1}^{m}{{\bf u}}_{i}^{(\! k+1 \!)} = {\bf A}_{0}{\bf x}^{(\! k+1 \!)} - {\bf b}_{0}.
\end{align}

\begin{remark}
Equality (\ref{u_and_x}) is one of the keys to the success of proving the monotonicity of the augmented Lagrangian function, which will be shown subsequently. Moreover, such an equality or other similar results are not available if we update ${\bf x}$ before $\{ {\bf z}_{i} \}$, as done in \cite{Huang2016}.
\end{remark}

Now, we show the monotonicity of the augmented Lagrangian function, $\mathcal{L}({\bf x}, \{{\bf z}_{i}\}, \{{\bf u}_{i}\})$, defined in (\ref{augLagFun}).

\begin{theorem*}
\label{monotonicity_lemma}
As long as the augmented Lagrangian parameter
\begin{align}
\label{rho_condition}
\rho > \max\left\{ - \frac{\lambda_{\mathrm{min}}}{m} , \frac{\sqrt{mC}\lambda_{\mathrm{max}} + \max\{-\lambda_{\mathrm{min}} , 0\}}{m} \right\},
\end{align}
where $C$ stands for some constant, and $\lambda_{\mathrm{min}}$ and $\lambda_{\mathrm{max}}$ are short for $\lambda_{\mathrm{min}}({\bf A}_{0})$ and $\lambda_{\mathrm{max}}({\bf A}_{0})$, respectively, we have:
\begin{itemize}
	\item $\mathcal{L}({\bf x}^{(\! k+1 \!)} \!, \{{\bf z}_{i}^{(\! k+1 \!)}\} \!, \{{\bf u}_{i}^{(\! k+1 \!)}\}) \leq \mathcal{L}({\bf x}^{(\! k \!)} \!, \{{\bf z}_{i}^{(\! k \!)}\} \!, \{{\bf u}_{i}^{(\! k \!)}\})$, $\forall k = 0, 1, 2, \cdots$;
	\item $\displaystyle \lim_{k \to \infty}\left\|{\bf x}^{(\!k + 1 \!)} - {\bf x}^{(\!k\!)}\right\|_{2} = 0$.
\end{itemize}
\end{theorem*}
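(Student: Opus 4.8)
The plan is to monitor the one-step change $\mathcal{L}^{(k+1)} - \mathcal{L}^{(k)}$ of the augmented Lagrangian (\ref{augLagFun}) and to bound it above by a negative multiple of $\|{\bf x}^{(k+1)} - {\bf x}^{(k)}\|_2^2$. Writing $\mathcal{L}^{(k)} = \mathcal{L}({\bf x}^{(k)}, \{{\bf z}_i^{(k)}\}, \{{\bf u}_i^{(k)}\})$ and inserting the partially updated iterates $({\bf x}^{(k)}, \{{\bf z}_i^{(k+1)}\}, \{{\bf u}_i^{(k)}\})$ and $({\bf x}^{(k+1)}, \{{\bf z}_i^{(k+1)}\}, \{{\bf u}_i^{(k)}\})$, I would split $\mathcal{L}^{(k+1)} - \mathcal{L}^{(k)}$ into three telescoping differences, one isolating each block update in (\ref{ADMM_k1_prob}): a ${\bf z}$-block term, an ${\bf x}$-block term, and a ${\bf u}$-block term.

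First I would dispose of the ${\bf z}$-block term. Since the only $\{{\bf z}_i\}$-dependent part of $\mathcal{L}({\bf x}^{(k)}, \cdot, \{{\bf u}_i^{(k)}\})$ is $\rho\sum_i \|{\bf z}_i - {\bf x}^{(k)} + {\bf u}_i^{(k)}\|_2^2$, the update (\ref{ADMM_z_k1}) is exactly the minimization of this part over the feasible set; hence, assuming a feasible initialization so that ${\bf z}_i^{(k)}$ stays feasible, this term is $\le 0$. Next, the ${\bf x}$-block term is the increment of a quadratic in ${\bf x}$ with Hessian ${\bf A}_0 + m\rho {\bf I}_n$. As long as $\rho > -\lambda_{\min}/m$ this quadratic is strongly convex and (\ref{ADMM_x_k1}) returns its exact minimizer, so the term equals $-({\bf x}^{(k)} - {\bf x}^{(k+1)})^{\mathrm H}({\bf A}_0 + m\rho{\bf I}_n)({\bf x}^{(k)} - {\bf x}^{(k+1)}) \le -(m\rho + \lambda_{\min})\|{\bf x}^{(k+1)} - {\bf x}^{(k)}\|_2^2$. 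Finally, only the coupling term of (\ref{augLagFun}) depends on $\{{\bf u}_i\}$; using ${\bf u}_i^{(k+1)} - {\bf u}_i^{(k)} = {\bf z}_i^{(k+1)} - {\bf x}^{(k+1)}$ from (\ref{ADMM_u_k1}), a short expansion gives the ${\bf u}$-block term exactly as $2\rho \sum_i \|{\bf u}_i^{(k+1)} - {\bf u}_i^{(k)}\|_2^2$, which is non-negative and must be absorbed by the ${\bf x}$-block term.

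Combining the three pieces, monotonicity reduces to $2\rho\sum_i \|{\bf u}_i^{(k+1)} - {\bf u}_i^{(k)}\|_2^2 \le (m\rho + \lambda_{\min})\|{\bf x}^{(k+1)} - {\bf x}^{(k)}\|_2^2$. Here equality (\ref{u_and_x}) is essential: subtracting its two consecutive instances yields $\rho\sum_i ({\bf u}_i^{(k+1)} - {\bf u}_i^{(k)}) = {\bf A}_0 ({\bf x}^{(k+1)} - {\bf x}^{(k)})$, so $\|\sum_i ({\bf u}_i^{(k+1)} - {\bf u}_i^{(k)})\|_2^2 \le (\lambda_{\max}^2/\rho^2)\|{\bf x}^{(k+1)} - {\bf x}^{(k)}\|_2^2$. \emph{This is where I expect the main obstacle to be}: (\ref{u_and_x}) controls only the norm of the \emph{sum} of the dual increments, whereas the ${\bf u}$-block involves the \emph{sum of squared norms}, and the two cannot be related by any universal constant since the per-constraint feasible sets in (\ref{ADMM_z_k1}) are in general non-convex and the increments may partially cancel. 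I would therefore bring in the constant $C$ of the statement precisely as a bound linking $\sum_i \|{\bf u}_i^{(k+1)} - {\bf u}_i^{(k)}\|_2^2$ to $\|\sum_i ({\bf u}_i^{(k+1)} - {\bf u}_i^{(k)})\|_2^2$, substitute the spectral estimate above, and collect terms; forcing the resulting coefficient of $\|{\bf x}^{(k+1)} - {\bf x}^{(k)}\|_2^2$ to be non-positive is what yields the threshold (\ref{rho_condition}), with the second argument of the maximum making the ${\bf u}$-block dominated and the first keeping the ${\bf x}$-subproblem strongly convex.

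For the second claim I would use the strict form of the estimate: under (\ref{rho_condition}) there is a $\delta > 0$ with $\mathcal{L}^{(k)} - \mathcal{L}^{(k+1)} \ge \delta \|{\bf x}^{(k+1)} - {\bf x}^{(k)}\|_2^2$. Telescoping over $k$ gives $\delta \sum_{k} \|{\bf x}^{(k+1)} - {\bf x}^{(k)}\|_2^2 \le \mathcal{L}^{(0)} - \lim_{k\to\infty}\mathcal{L}^{(k)}$, so provided the monotone sequence $\{\mathcal{L}^{(k)}\}$ is bounded from below the right-hand side is finite, the series converges, and hence $\|{\bf x}^{(k+1)} - {\bf x}^{(k)}\|_2 \to 0$. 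The remaining ingredient, boundedness from below of $\mathcal{L}^{(k)}$, is exactly the point the abstract singles out as verified only numerically for ${\bf A}_0 \succ 0$, so I would flag it as the genuinely delicate step rather than the telescoping algebra, which is routine.
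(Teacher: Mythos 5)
Your proposal follows essentially the same route as the paper's own proof: the same three-term splitting through the intermediate points $({\bf x}^{(k)},\{{\bf z}_i^{(k+1)}\},\{{\bf u}_i^{(k)}\})$ and $({\bf x}^{(k+1)},\{{\bf z}_i^{(k+1)}\},\{{\bf u}_i^{(k)}\})$, the ${\bf z}$-block disposed of by optimality of (\ref{ADMM_z_k1}), the ${\bf x}$-block bounded by $-(m\rho+\lambda_{\mathrm{min}})\|{\bf x}^{(k+1)}-{\bf x}^{(k)}\|_2^2$ (you expand the quadratic exactly where the paper invokes strong convexity with $\gamma_{\bf x}=2\lambda_{\mathrm{min}}+2m\rho$, yielding the identical bound), the ${\bf u}$-block computed exactly as $2\rho\sum_i\|{\bf u}_i^{(k+1)}-{\bf u}_i^{(k)}\|_2^2$ and then controlled via the constant $C$ together with (\ref{u_and_x}), and the same telescoping argument for the second bullet. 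The two delicate points you flag are real and are exactly where the paper itself is non-rigorous: the existence of a universal $C$ linking $\sum_i\|\cdot\|_2^2$ to $\|\sum_i\cdot\|_2^2$ is only justified statistically in Remark \ref{upperbound_L2_norm_u}, and the paper's final summation silently drops $\lim_{k\to\infty}\mathcal{L}^{(k)}$, i.e., implicitly assumes the boundedness from below that it otherwise only verifies by simulation.
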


\begin{proof}
To show $\mathcal{L}( {\bf x}^{(\! k\!+\!1 \!)}, \! \{ \! {\bf z}_{i}^{(\! k\!+\!1 \!)} \! \}, \! \{ \! {\bf u}_{i}^{(\! k\!+\!1 \!)} \! \} ) \leq \mathcal{L}( {\bf x}^{(\! k \!)}, \! \{ \! {\bf z}_{i}^{(\! k \!)} \! \}, \! \{ \! {\bf u}_{i}^{(\! k \!)} \! \} )$ holds $\forall k = 0, 1, 2, \cdots$, we formulate their difference as
\begin{subequations}
\label{difference_L}
\begin{align}
& \mathcal{L}({\bf x}^{(\! k+1 \!)}, \{{\bf z}_{i}^{(\! k+1 \!)}\}, \{{\bf u}_{i}^{(\! k+1 \!)}\}) - \mathcal{L}({\bf x}^{(\! k \!)}, \{{\bf z}_{i}^{(\! k \!)}\}, \{{\bf u}_{i}^{(\! k \!)}\}) \nonumber \\
\label{difference_a}
= & \left[ \mathcal{L}( \! {\bf x}^{(\! k + 1 \!)} \! , \! \{ \! {\bf z}_{i}^{(\! k + 1 \!)} \! \} \! , \! \{ \! {\bf u}_{i}^{(\! k + 1 \!)} \!\} \! ) - \mathcal{L}( \! {\bf x}^{(\! k + 1 \!)} \! , \! \{ \! {\bf z}_{i}^{(\! k + 1 \!)} \! \} \! , \! \{ \! {\bf u}_{i}^{(\! k \!)} \! \} \! ) \right] \\
\label{difference_b}
& + \left[ \mathcal{L}({\bf x}^{(\! k+1 \!)} \! , \! \{ \! {\bf z}_{i}^{(\! k+1 \!)} \! \} \! , \! \{ \! {\bf u}_{i}^{(\! k \!)} \! \}) - \mathcal{L}({\bf x}^{(\! k \!)} \! , \! \{ \! {\bf z}_{i}^{(\! k + 1 \!)} \! \} \! , \! \{ \! {\bf u}_{i}^{(\! k \!)} \! \}) \right] \\
\label{difference_c}
& + \left[ \mathcal{L}({\bf x}^{(\! k \!)} \! , \! \{ \! {\bf z}_{i}^{(\! k + 1\!)} \! \} \! , \! \{ \! {\bf u}_{i}^{(\! k \!)} \! \}) - \mathcal{L}({\bf x}^{(\! k \!)} \! , \! \{ \! {\bf z}_{i}^{(\! k \!)} \! \} \! , \! \{ \! {\bf u}_{i}^{(\! k \!)} \! \}) \right].
\end{align}
\end{subequations}

In the sequel, we successively deal with (\ref{difference_a}), (\ref{difference_b}), and (\ref{difference_c}). Firstly, for (\ref{difference_a}), it is calculated as
\begin{align}
\label{long_equa}
& \mathcal{L}( {\bf x}^{(\! k + 1 \!)} , \{ \! {\bf z}_{i}^{(\! k + 1 \!)} \! \} , \{ \! {\bf u}_{i}^{(\! k + 1 \!)} \!\} ) - \mathcal{L}( {\bf x}^{(\! k + 1 \!)} , \{ \! {\bf z}_{i}^{(\! k + 1 \!)} \! \} , \{ \! {\bf u}_{i}^{(\! k \!)} \! \} ) \nonumber \\
\stackrel{\text{(\ref{long_equa}a)}}{=} ~\! & \rho \! \sum_{i = 1}^{m }  \left(  \left\|{\bf z}_{i}^{(\! k+1 \!)} - {\bf x}^{(\! k+1 \!)} + {\bf u}_{i}^{(\! k+1 \!)} \right\|_{2}^{2} - \left\|{\bf z}_{i}^{(\! k+1 \!)} - {\bf x}^{(\! k+1 \!)} + {\bf u}_{i}^{(\! k \!)} \right\|_{2}^{2} - \left\|{\bf u}_{i}^{(\! k+1 \!)}\right\|_{2}^{2} + \left\|{\bf u}_{i}^{(\! k \!)}\right\|_{2}^{2}  \right)  \nonumber  \\
\stackrel{\text{(\ref{long_equa}b)}}{=} ~\! & \rho \sum_{i = 1}^{m} \left( \left\| 2{{\bf u}}_{i}^{(\! k+1 \!)} - {{\bf u}}_{i}^{(\! k \!)} \right\|_{2}^{2}  -  2 \left\|{{\bf u}}_{i}^{(\! k+1 \!)}\right\|_{2}^{2} + \left\|{{\bf u}}_{i}^{(\! k \!)}\right\|_{2}^{2} \right)  \nonumber  \\
= ~~\! & 2 \rho \sum_{i = 1}^{m} \left\| {{\bf u}}_{i}^{(\! k+1 \!)} - {{\bf u}}_{i}^{(\! k \!)} \right\|_{2}^{2} ~
\stackrel{\text{(\ref{long_equa}c)}}{\leq} ~ \rho C \left\| \sum_{i = 1}^{m}\left({\bf u}_{i}^{(\! k+1\!)} - {\bf u}_{i}^{(\! k \!)} \right) \right\|_{2}^{2} ~
\stackrel{\text{(\ref{long_equa}d)}}{\leq} ~ \frac{C}{\rho} \lambda_{\mathrm{max}}^{2}({\bf A}_{0}) \left\| {\bf x}^{(\! k+1 \!)} - {\bf x}^{(\! k \!)} \right\|_{2}^{2},
\end{align}
where in (\ref{long_equa}a) we use the definition of $\mathcal{L}({\bf x}, \{{\bf z}_{i}\}, \{{\bf u}_{i}\})$; in (\ref{long_equa}b) we use (\ref{ADMM_u_k1}); in (\ref{long_equa}c) we use the fact discussed in Remark \ref{upperbound_L2_norm_u} as below; in (\ref{long_equa}d) we employ (\ref{u_and_x}) and the inequality $\|{{\bf H}}{\bf a}\|_{2}^{2} \leq \lambda_{\mathrm{max}}^{2}({\bf H})\|{\bf a}\|_{2}^{2}$ for any Hermitian matrix ${\bf H} \in \mathbb{C}^{n \times n}$ and any vector ${\bf a} \in \mathbb{C}^{n}$.

Next, we move on to (\ref{difference_b}), which is calculated as
\begin{align}
\label{L_x_k1k}
& \mathcal{L}({\bf x}^{(\! k+1 \!)} , \{ \! {\bf z}_{i}^{(\! k+1 \!)} \! \} , \{ \! {\bf u}_{i}^{(\! k \!)} \! \}) - \mathcal{L}({\bf x}^{(\! k \!)} , \{ \! {\bf z}_{i}^{(\! k+1 \!)} \! \} , \{ \! {\bf u}_{i}^{(\! k \!)} \! \}) \nonumber \\
\stackrel{\text{(\ref{L_x_k1k}a)}}{\leq} ~\! & \Re \! \left\{ \left\langle \nabla_{{\bf x}}\mathcal{L}({\bf x}^{(\! k+1 \!)}, \{{\bf z}_{i}^{(\! k+1 \!)}\}, \{{\bf u}_{i}^{(\! k \!)}\}) , {\bf x}^{(\! k+1 \!)} \! - \! {\bf x}^{(\! k \!)} \right\rangle \right\}  - \frac{\gamma_{{\bf x}}}{2}\left\| {\bf x}^{(\! k + 1 \!)} - {\bf x}^{(\! k \!)} \right\|_{2}^{2} \nonumber \\
\stackrel{\text{(\ref{L_x_k1k}b)}}{=} ~\!\! & - \! \left( \lambda_{\mathrm{min}}({\bf A}_{0}) + {m\rho} \right) \left\| {\bf x}^{(\! k + 1 \!)} - {\bf x}^{(\! k \!)} \right\|_{2}^{2},
\end{align}
where in (\ref{L_x_k1k}a) we utilize the fact that function $\mathcal{L}({\bf x} , \{ \! {\bf z}_{i} \! \} , \{ \! {\bf u}_{i} \! \})$ is strongly convex (see Remark \ref{strongconvex_remark} below) with respect to (w.r.t.) ${\bf x}$ with parameter $\gamma_{{\bf x}} > 0$ \cite{Ryu2016}; in (\ref{L_x_k1k}b) we employ the optimality condition, namely, $\nabla_{{\bf x}}\mathcal{L}({\bf x}^{(\! k+1 \!)}, \{{\bf z}_{i}^{(\! k+1 \!)}\}, \{{\bf u}_{i}^{(\! k \!)}\}) = {\bf 0}$, and $\gamma_{{\bf x}} = 2\lambda_{\mathrm{min}}({\bf A}_{0}) + 2m\rho$ (see Remark \ref{strongconvex_remark}).

Lastly, (\ref{difference_c}) is calculated as
\begin{align}
\label{L_z_k1k}
& \mathcal{L}({\bf x}^{(\! k \!)} , \{ \! {\bf z}_{i}^{(\! k +1 \!)} \! \} , \{ \! {\bf u}_{i}^{(\! k \!)} \! \}) - \mathcal{L}({\bf x}^{(\! k \!)} , \{ \! {\bf z}_{i}^{(\! k \!)} \! \} , \{ \! {\bf u}_{i}^{(\! k \!)} \! \}) \nonumber \\
\stackrel{\text{(\ref{L_z_k1k}a)}}{=} ~\! & \rho  \sum_{i = 1}^{m } \!\! \left( \! \left\|{\bf z}_{i}^{(\! k+1 \!)} \! - \! {\bf x}^{(\! k \!)} \! + \! {\bf u}_{i}^{(\! k \!)} \right\|_{2}^{2} \! - \! \left\|{\bf z}_{i}^{(\! k \!)} \! - \! {\bf x}^{(\! k \!)} \! + \! {\bf u}_{i}^{(\! k \!)} \right\|_{2}^{2} \! \right) \stackrel{\text{(\ref{L_z_k1k}b)}}{\leq} ~\!  0,
\end{align}
where we utilize the definition of $\mathcal{L}({\bf x}, \{{\bf z}_{i}\}, \{{\bf u}_{i}\})$ in (\ref{L_z_k1k}a); in (\ref{L_z_k1k}b) we use the fact that ${\bf z}_{i}^{(\!k+1\!)}$ is optimal to (\ref{ADMM_z_k1}).

Therefore, by substituting inequalities (\ref{long_equa}), (\ref{L_x_k1k}), and (\ref{L_z_k1k}) back into (\ref{difference_L}), we have
\begin{align*}
&  \mathcal{L}({\bf x}^{(\! k+1 \!)}, \{{\bf z}_{i}^{(\! k+1 \!)}\}, \{{\bf u}_{i}^{(\! k+1 \!)}\}) - \mathcal{L}({\bf x}^{(\! k \!)}, \{{\bf z}_{i}^{(\! k \!)}\}, \{{\bf u}_{i}^{(\! k \!)}\})  \\
 \leq & \underbrace{ \left( \! \frac{C}{\rho} \lambda_{\mathrm{max}}^{2} \! - \! \lambda_{\mathrm{min}} \! - \! {m\rho} \! \right) \! \left\| {\bf x}^{(\! k+1 \!)} \! - \! {\bf x}^{(\! k \!)} \right\|_{2}^{2} }_{\text{(i)}}.
\end{align*}

We provide the following discussion regarding the term (i) of the above inequality. It is seen that if 
\begin{align*}
\rho < {\left( - \sqrt{\lambda_{\mathrm{min}}^{2} + 4mC\lambda_{\mathrm{max}}^{2} } -\lambda_{\mathrm{min}} \right)} / {(2m)}
\end{align*}
(which should be deleted because of $\rho > 0$) or 
\begin{align*}
\rho > {\left( \sqrt{\lambda_{\mathrm{min}}^{2} + 4mC\lambda_{\mathrm{max}}^{2} } - \lambda_{\mathrm{min}} \right)} / {(2m)},
\end{align*}
the coefficient $\frac{C}{\rho} \lambda_{\mathrm{max}}^{2} \! - \! \lambda_{\mathrm{min}} \! - \! {m\rho} < 0$ and hence $(\text{i}) \leq 0$. Furthermore, as $\sqrt{\lambda_{\mathrm{min}}^{2} \! + \! 4mC\lambda_{\mathrm{max}}^{2} } \! - \! \lambda_{\mathrm{min}} \leq 2\sqrt{mC}\lambda_{\mathrm{max}} + |\lambda_{\mathrm{min}}| - \lambda_{\mathrm{min}} = 2\sqrt{mC}\lambda_{\mathrm{max}} + 2 \max\{ -\lambda_{\mathrm{min}} , 0 \}$, we reach the following conclusion: $(\text{i}) \leq 0$, as long as 
\begin{align}
\label{rho_positive}
\rho > \left( {\sqrt{mC}} \lambda_{\mathrm{max}} + \max\{ -\lambda_{\mathrm{min}} , 0 \} \right)/m.
\end{align}

Therefore, combining  (\ref{rho_positive}) and (\ref{rho_stongconvex}), we conclude that $\mathcal{L}({\bf x}^{(\! k+1 \!)} \! , \! \{{\bf z}_{i}^{(\! k+1 \!)}\} \! , \! \{{\bf u}_{i}^{(\! k+1 \!)}\} \! ) - \mathcal{L}({\bf x}^{(\! k \!)} \! , \! \{{\bf z}_{i}^{(\! k \!)}\} \! , \! \{{\bf u}_{i}^{(\! k \!)}\} \!) \leq (\text{i}) \leq 0$ as long as $\rho$ satisfies (\ref{rho_condition}).

To prove the second part of the theorem, we first define $\Phi \triangleq  - \frac{C}{\rho} \lambda_{\mathrm{max}}^{2} + \lambda_{\mathrm{min}} + {m\rho} $, and denote $\mathcal{L}( {\bf x}^{(\!k+1\!)} \! , \! \{ \! {\bf z}_{i}^{(\!k+1\!)} \! \} \! , \! \{ \! {\bf u}_{i}^{(\!k+1\!)} \!\} )$ and $\mathcal{L}( {\bf x}^{(\!k\!)} \! , \! \{ \! {\bf z}_{i}^{(\!k\!)} \! \} \! , \! \{ \! {\bf u}_{i}^{(\!k\!)} \!\} )$ by $\mathcal{L}^{(\!k+1\!)}$ and $\mathcal{L}^{(\!k\!)}$, respectively. Then, when (\ref{rho_condition}) holds, we have $\Phi > 0$ and
\begin{align*}
\mathcal{L}^{(\!k\!)} - \mathcal{L}^{(\!k+1\!)} \geq \Phi\|{\bf x}^{(\!k+1\!)} - {\bf x}^{(\!k\!)} \|_{2}^{2}, ~ \forall k = 0, 1, 2, \cdots.
\end{align*}
Summing all the above inequalities over all $k \geq 0$, we obtain
\begin{align*}
\mathcal{L}^{(\!0\!)} \geq \Phi \sum_{k=0}^{\infty}\|{\bf x}^{(\!k+1\!)} \!-\! {\bf x}^{(\!k\!)} \|_{2}^{2},
\end{align*} 
which implies \cite{Lu2015} that $\displaystyle \lim_{k \to \infty} \! \|{\bf x}^{(\!k+1\!)} \!-\! {\bf x}^{(\!k\!)} \|_{2} = 0$. 
\end{proof}

\begin{remark}
\label{upperbound_L2_norm_u}
It is easy to see that $\left\|{\bf u}_{i}^{(\!k+1\!)} - {\bf u}_{i}^{(\!k\!)}\right\|_{2}^{2}$, $\forall i = 1, 2, \cdots, m$, are bounded from above. Hence, there exists an upper bound for $\displaystyle \sum_{i = 1}^{m}\left\|{\bf u}_{i}^{(\!k+1\!)} - {\bf u}_{i}^{(\!k\!)}\right\|_{2}^{2}$. We can find a real-valued constant $C$, such that
\begin{align}
\label{inquation_u}
\displaystyle \sum_{i = 1}^{m}\left\|{\bf u}_{i}^{(\!k+1\!)} - {\bf u}_{i}^{(\!k\!)}\right\|_{2}^{2} \leq C \left\| \displaystyle \sum_{i = 1}^{m} \left( {\bf u}_{i}^{(\!k+1\!)} - {\bf u}_{i}^{(\!k\!)} \right) \right\|_{2}^{2}.
\end{align}
In the case where $m = 1$ (i.e., only one constraint in Problem (\ref{general_QCQP_prob})), the above inequality reduces to
\begin{align}
\label{inquation_u_m1}
\left\|{\bf u}_{1}^{(\!k+1\!)} - {\bf u}_{1}^{(\!k\!)}\right\|_{2}^{2} \leq C \left\|  {\bf u}_{1}^{(\!k+1\!)} - {\bf u}_{1}^{(\!k\!)}  \right\|_{2}^{2},
\end{align}
which implies that the constant $C$ can be selected as 
\begin{align}
\label{C_m1}
\left\{
\begin{array}{cc}
C \text{~\! is arbitrary},  & \text{if ~\!} {\bf u}_{1}^{(\!k+1\!)} = {\bf u}_{1}^{(\!k\!)}, \\
C \geq 1, & \text{otherwise}.
\end{array}
\right.
\end{align}

In what follows, we provide an explanation on how to choose $C$ from a statistical perspective. For the sake of notational simplicity, we define ${\bf d}_{i} \triangleq {\bf u}_{i}^{(\!k+1\!)} - {\bf u}_{i}^{(\!k\!)} \in \mathbb{C}^{n}$, $\forall i = 1, 2, \cdots, m$, and ${\bf d}_{i} = [d_{i(1)}, d_{i(2)}, \cdots, d_{i(n)}]^{\mathrm{T}}$ where $d_{i(j)}$ is the $j$-th entry of ${\bf d}_{i}$. We make the following assumptions: 
\begin{enumerate}[label={\textnormal{A\arabic*)}}]
\item ${\bf d}_{i}$'s are independently and identically distributed (i.i.d.) complex random variables; 
\item for any ${\bf d}_{i}$, all its entries are i.i.d.; 
\item $\mathrm{E}\{{\bf d}_{i}\} = {\mu}{\bf 1}$ and $\mathrm{Var}\{{\bf d}_{i}\} = \sigma^{2}{\bf I}_{n}$, $\forall i = 1, 2, \cdots, m$. 
\end{enumerate}
These assumptions indicate that $\mathrm{E}\{d_{i(j)}\} = \mu$ and $\mathrm{Var}\{d_{i(j)}\} = \sigma^{2}$, for all $i = 1, 2, \cdots, m$ and $j = 1, 2, \cdots, n$.

We focus on the expected value of (\ref{inquation_u}), which is given as
\begin{align}
\label{inquation_u_exp}
\displaystyle \mathrm{E}\left\{ \sum_{i = 1}^{m}\left\|{\bf d}_{i}\right\|_{2}^{2} \right\} \leq \mathrm{E} \left\{ C \left\| \displaystyle \sum_{i = 1}^{m} {\bf d}_{i} \right\|_{2}^{2} \right\}.
\end{align}
The left-hand side of (\ref{inquation_u_exp}) is determined as 
\begin{align}
\label{LeftHand}
\displaystyle \mathrm{E}\left\{ \sum_{i = 1}^{m}\left\|{\bf d}_{i}\right\|_{2}^{2} \right\} ~\! = ~\! & \displaystyle \sum_{i = 1}^{m} \mathrm{E}\left\{ \left\|{\bf d}_{i}\right\|_{2}^{2} \right\} \nonumber \\
= ~\! & \displaystyle \sum_{i = 1}^{m} \mathrm{E}\left\{ \sum_{j = 1}^{n} |d_{i(j)}|^{2} \right\} \nonumber \\
= ~\! & \displaystyle \sum_{i = 1}^{m}\sum_{j = 1}^{n} \mathrm{E}\left\{ |d_{i(j)}|^{2} \right\} \nonumber \\
\stackrel{\text{(\ref{LeftHand}{\textnormal{a}})}}{=} & ~\! \displaystyle \sum_{i = 1}^{m}\sum_{j = 1}^{n} \left[ \mathrm{Var}\{ d_{i(j)} \} + \left| \mathrm{E}\{ d_{i(j)} \} \right|^{2} \right] \nonumber \\
= ~\! & \displaystyle \sum_{i = 1}^{m}\sum_{j = 1}^{n} (\sigma^{2} + | \mu |^{2}) \nonumber \\
= ~\! & mn(\sigma^{2} + | \mu |^{2}),
\end{align}
where in {\text{(\ref{LeftHand}{\textnormal{a}})}} we use the relationship between expected value and variance of a complex random variable, see for example Pages 117 and 118 in \cite{Park2018}. Similarly, the right-hand side of (\ref{inquation_u_exp}) is computed as
\begingroup
\allowdisplaybreaks
\begin{align}
\label{RightHand}
\mathrm{E} \left\{ C \left\| \displaystyle \sum_{i = 1}^{m} {\bf d}_{i} \right\|_{2}^{2} \right\} ~\! = ~\! & C \times \mathrm{E}\left\{  \left\| \displaystyle \sum_{i = 1}^{m} {\bf d}_{i} \right\|_{2}^{2} \right\} \nonumber \\
= ~\! & C \times \mathrm{E}\left\{ \displaystyle \sum_{j = 1}^{n} \left| \displaystyle \sum_{i = 1}^{m} d_{i(j)} \right|^{2}  \right\} \nonumber \\
= ~\! & C \times \displaystyle \sum_{j = 1}^{n} \mathrm{E}\left\{ \left| \displaystyle \sum_{i = 1}^{m} d_{i(j)} \right|^{2}  \right\} \nonumber \\
\stackrel{\text{(\ref{RightHand}{\textnormal{a}})}}{=} \! & ~ C \times \displaystyle \sum_{j = 1}^{n} \left[ \mathrm{Var}\left\{ \displaystyle \sum_{i = 1}^{m} d_{i(j)} \right\} + \left| \mathrm{E}\left\{ \displaystyle \sum_{i = 1}^{m} d_{i(j)} \right\} \right|^{2} \right] \nonumber \\
= ~\! & C \times \displaystyle \sum_{j = 1}^{n} \left[ \displaystyle \sum_{i = 1}^{m} \mathrm{Var}\left\{ d_{i(j)} \right\} + \left| \displaystyle \sum_{i = 1}^{m} \mathrm{E}\left\{ d_{i(j)} \right\} \right|^{2} \right] \nonumber \\
= ~\! & C \times \displaystyle \sum_{j = 1}^{n} \left( m\sigma^{2} + m^{2} |\mu|^{2} \right) \nonumber \\
= ~\! & Cmn(\sigma^{2} + m |\mu|^{2}),
\end{align}
\endgroup
where in {\text{(\ref{RightHand}{\textnormal{a}})}} we use the same strategy as in {\text{(\ref{LeftHand}{\textnormal{a}})}}. Substituting (\ref{LeftHand}) and (\ref{RightHand}) into (\ref{inquation_u_exp}) yields
\begin{align}
\label{C_ineq}
\sigma^{2} + |\mu|^{2} \leq C(\sigma^{2} + m |\mu|^{2}),
\end{align}
which implies that
\begin{align}
\label{C_m}
\left\{
\begin{array}{cc}
C \text{~\! is arbitrary},  & \text{if ~\!} \mu = 0 \text{~\! and ~\!} \sigma^{2} = 0, \\
C \geq \frac{\sigma^{2} + |\mu|^{2}}{\sigma^{2} + m |\mu|^{2}}, & \text{otherwise}.
\end{array}
\right.
\end{align}
Note that: {\textnormal{i)}} the case where $\mu = 0$ and $\sigma^{2} = 0$ indicates that ${\bf d}_{i} = {\bf 0}$ (i.e., ${\bf u}_{i}^{(\!k+1\!)} = {\bf u}_{i}^{(\!k\!)}$), $\forall i = 1, 2, \cdots, m$; and {\textnormal{ii)}} by assigning $m = 1$, we have $\frac{\sigma^{2} + |\mu|^{2}}{\sigma^{2} + m |\mu|^{2}} = 1$. Therefore, when $m = 1$, (\ref{C_m}) is consistent with (\ref{C_m1}).
\end{remark}

\begin{remark}
\label{strongconvex_remark}
The statement that $\mathcal{L}( {\bf x}, \! \{ \! {\bf z}_{i} \! \}, \! \{ \! {\bf u}_{i} \! \} )$ is strongly convex w.r.t. ${\bf x}$ with parameter $\gamma_{{\bf x}}$, is equivalent to the statement that $\mathcal{L}({\bf x} , \{ \! {\bf z}_{i} \! \} , \{ \! {\bf u}_{i} \! \}) - \frac{\gamma_{{\bf x}}}{2}\|{\bf x}\|_{2}^{2}$ is convex w.r.t. ${\bf x}$, see \cite{Zhou2018}. Note that $\mathcal{L}({\bf x} , \{ \! {\bf z}_{i} \! \} , \{ \! {\bf u}_{i} \! \}) - \frac{\gamma_{{\bf x}}}{2}\|{\bf x}\|_{2}^{2} = {\bf x}^{\mathrm{H}} [ {\bf A}_{0} + (m\rho - \frac{\gamma_{{\bf x}}}{2}){\bf I}_{n} ] {\bf x} - 2 \Re \! \left\{ \left( {\bf b}_{0} \!+\! \rho\sum_{i = 1}^{m}({\bf z}_{i} + {\bf u}_{i}) \right)^{\mathrm{H}}{\bf x} \right\} + C'$, where $C'$ is a constant independent of ${\bf x}$. $\mathcal{L}({\bf x} , \{ \! {\bf z}_{i} \! \} , \{ \! {\bf u}_{i} \! \}) - \frac{\gamma_{{\bf x}}}{2}\|{\bf x}\|_{2}^{2}$ is convex if and only if ${\bf A}_{0} + (m\rho - \frac{\gamma_{{\bf x}}}{2}){\bf I}_{n} \succeq 0$, which indicates that parameter $\gamma_{{\bf x}} = 2\lambda_{\mathrm{min}}({\bf A}_{0}) + 2m\rho$. This equality together with ${\gamma_{{\bf x}}} > 0$ yields
$\rho > - \lambda_{\mathrm{min}}({\bf A}_{0})/m$. Since $\rho > 0$, we have 
\begin{align}
\label{rho_stongconvex}
\rho > \max\left\{ - \lambda_{\mathrm{min}}({\bf A}_{0})/m , 0\right\}.
\end{align}
\end{remark}

\begin{remark}
\label{boundedness_remark}
When ${\bf A}_{0} \succ 0$, our simulation results show that the augmented Lagrangian function $\mathcal{L}({\bf x}, \{{\bf z}_{i}\}, \{{\bf u}_{i}\})$ is bounded from below during the consensus-ADMM iteration, i.e., (\ref{ADMM_k1_prob}). This observation together with the result in Theorem indicates the convergence of the consensus-ADMM. However, when ${\bf A}_{0}$ is indefinite, our simulation results show that $\mathcal{L}({\bf x}, \{{\bf z}_{i}\}, \{{\bf u}_{i}\})$ is unbounded.
\end{remark}

\section{Simulation Results}
\label{simulation_section}
In this section, we provide numerical simulations to confirm our theoretical analyses in Section \ref{convergenceproperty_section}. After fixing the problem dimensions $n$ and $m$, we first generate ${\bf x}_{\text{feas}} \sim \mathcal{CN}({\bf 0}, {\bf I}_{n})$. A Hermitian indefinite matrix ${\bf A}_{i}$ ($i = 0, 1, \cdots, m$) is generated by first randomly drawing an $n \times n$ matrix from $\mathcal{CN}(0, 1)$, and then taking the average of its Hermitian and itself. In the situation where we need ${\bf A}_{0} \succ 0$, we add $\epsilon {\bf I}_{n}$ to its indefinite counterpart, with $\epsilon$ larger than the absolute of the smallest eigenvalue of the indefinite matrix. We generate ${\bf b}_{i} \sim \mathcal{CN}({\bf 0}, {\bf I}_{n})$, $\forall i = 0, 1, \cdots, m$, and $c_{i} = {\bf x}_{\text{feas}}^{\mathrm{H}}{\bf A}_{i}{\bf x}_{\text{feas}} - 2\Re\{{\bf b}_{i}^{\mathrm{H}}{\bf x}_{\text{feas}}\} + |{v}_{i}|$, $\forall i = 1, 2, \cdots, m$, where $v_{i}$ is randomly drawn from $\mathcal{N}(0, 1)$. The constructed constraint set is thus guaranteed to be non-empty, because at least ${\bf x}_{\text{feas}}$ is feasible. The above-described problem setting is similar to those considered in \cite{Huang2016} and \cite{Mehanna2015}. We initiate the consensus-ADMM algorithm with ${\bf x}^{(\!0\!)} = {\bf x}_{\text{feas}}$ and ${\bf u}_{i}^{(\!0\!)} = {\bf 0}$, $\forall i = 1, 2, \cdots, m$.

In the first example, we set $n = 10$ and $m = 5$. All ${\bf A}_{i}$'s are indefinite. The augmented Lagrangian function value versus iteration index is plotted in Fig. \ref{AugLagvsIterIndex_IndefiniteA0}. It is seen that the function value is monotonically non-increasing, but unbounded.

In all the remaining examples, ${\bf A}_{i}$ ($i = 1, 2, \cdots, m$) are indefinite and ${\bf A}_{0} \succ 0$. In the second example, $n = 10$ and $m = 5$. The augmented Lagrangian function value versus iteration index is shown in Fig. \ref{AugLagvsIterIndex}. It is observed that the function value is not monotonic when $\rho = 2$ and $\rho = 5$; while it is monotonically non-increasing and bounded from below when $\rho \geq 10$. In Fig. \ref{AugLagvsIterIndex_MC}, we plot the function value versus $k$ in $20$ independent runs, with $n = 10$, $m = 5$, and $\rho = 10$. The function value versus $k$ is depicted in Fig. \ref{AugLagvsIterIndex_NumbConstraint} for different number of constraints, with $n = 10$ and $\rho = 20$. All the curves in Figs. \ref{AugLagvsIterIndex_MC} and \ref{AugLagvsIterIndex_NumbConstraint} indicate the convergence of the augmented Lagrangian function value sequence $\{ \mathcal{L}^{(\!k\!)} \}$.

In the last example, we test the distance of the variables, by using the $\ell_{2}$-norm as performance metric. We set $n = 10$, $m = 5$, and $\rho = 10$. The results are shown in Fig. \ref{DistancevsIterIndex}. It is seen that, both $\left\|{\bf x}^{(\! k + 1 \!)} - {\bf x}^{( \! k \! )} \right\|_{2}$ and $\displaystyle \sum_{i = 1}^{m} \left\| {\bf z}_{i}^{(\! k \!)} - {\bf x}^{(\! k \!)} \right\|_{2}$ approach 0 when $k$ is large enough.

\section{Conclusion}
\label{conclusion_section}
We studied the convergence property of the consensus-alternating direction method of multipliers (ADMM) for general quadratically constrained quadratic programs (QCQPs). We have proved that the augmented Lagrangian function is monotonically non-increasing if the augmented Lagrangian parameter is larger than a certain value. Simulation results showed that the augmented Lagrangian function is bounded from below when the matrix in the quadratic term of the objective function is positive definite. In such cases, the consensus-ADMM was shown to generate a convergent Lagrangian function value sequence. However, the augmented Lagrangian function is unbounded when the matrix in the quadratic term of the objective function is indefinite. We further proved that the distance of the variables between two successive iterations converges to $0$. Numerical simulations with definite as well as indefinite matrices in the QCQP were conducted. The simulation results verified the theoretical development. 

\balance
\biboptions{numbers,sort&compress}
\bibliography{refs}

\newpage

\begin{figure}[t]
	\centerline{\includegraphics[width=0.85\textwidth]{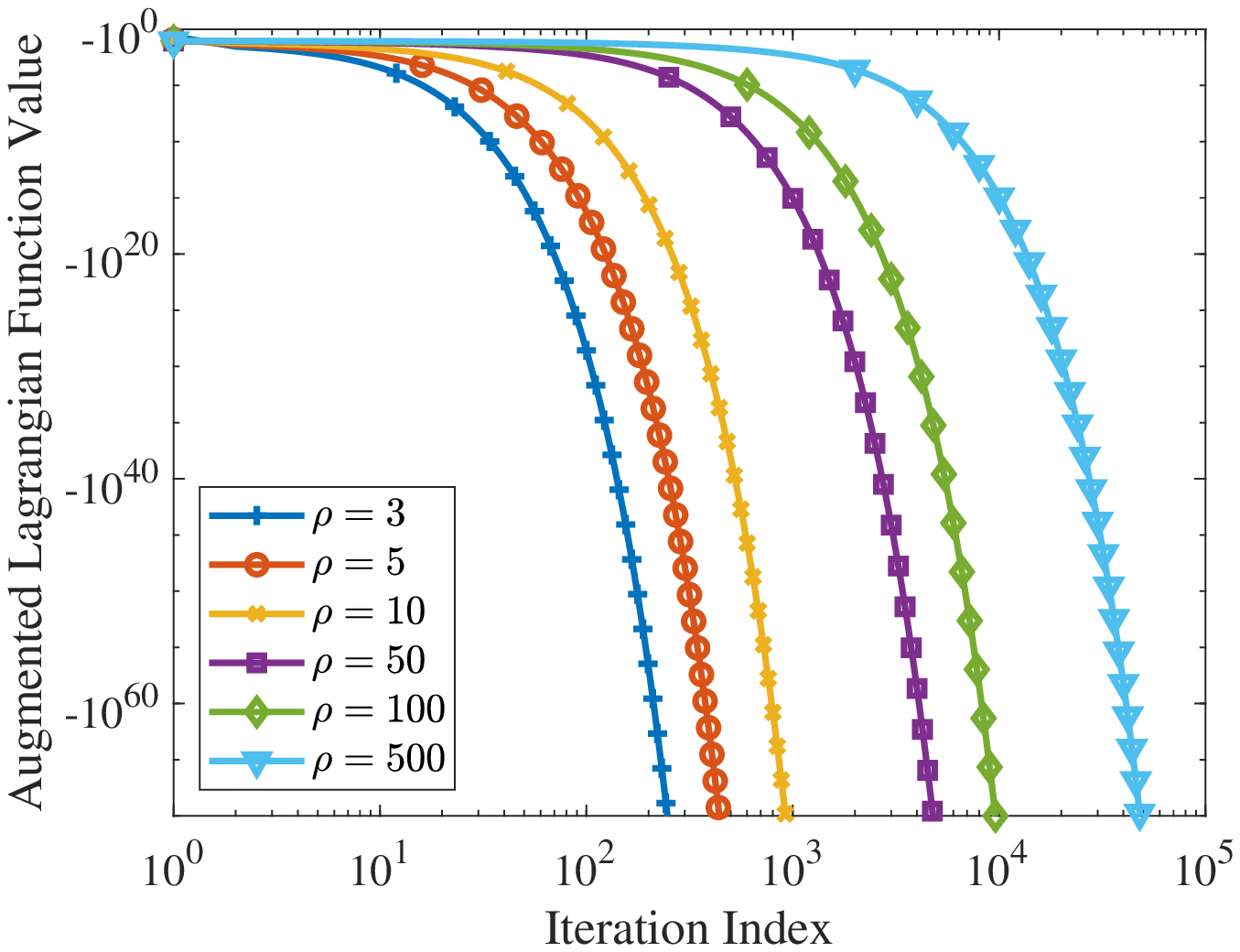}}
	\caption{Function value $\mathcal{L}^{(\!k\!)}$ versus iteration index $k$, with $n = 10$, $m = 5$, and indefinite ${\bf A}_{0}$.}
	\label{AugLagvsIterIndex_IndefiniteA0}
\end{figure}

\begin{figure}[t]
	\centerline{\includegraphics[width=0.85\textwidth]{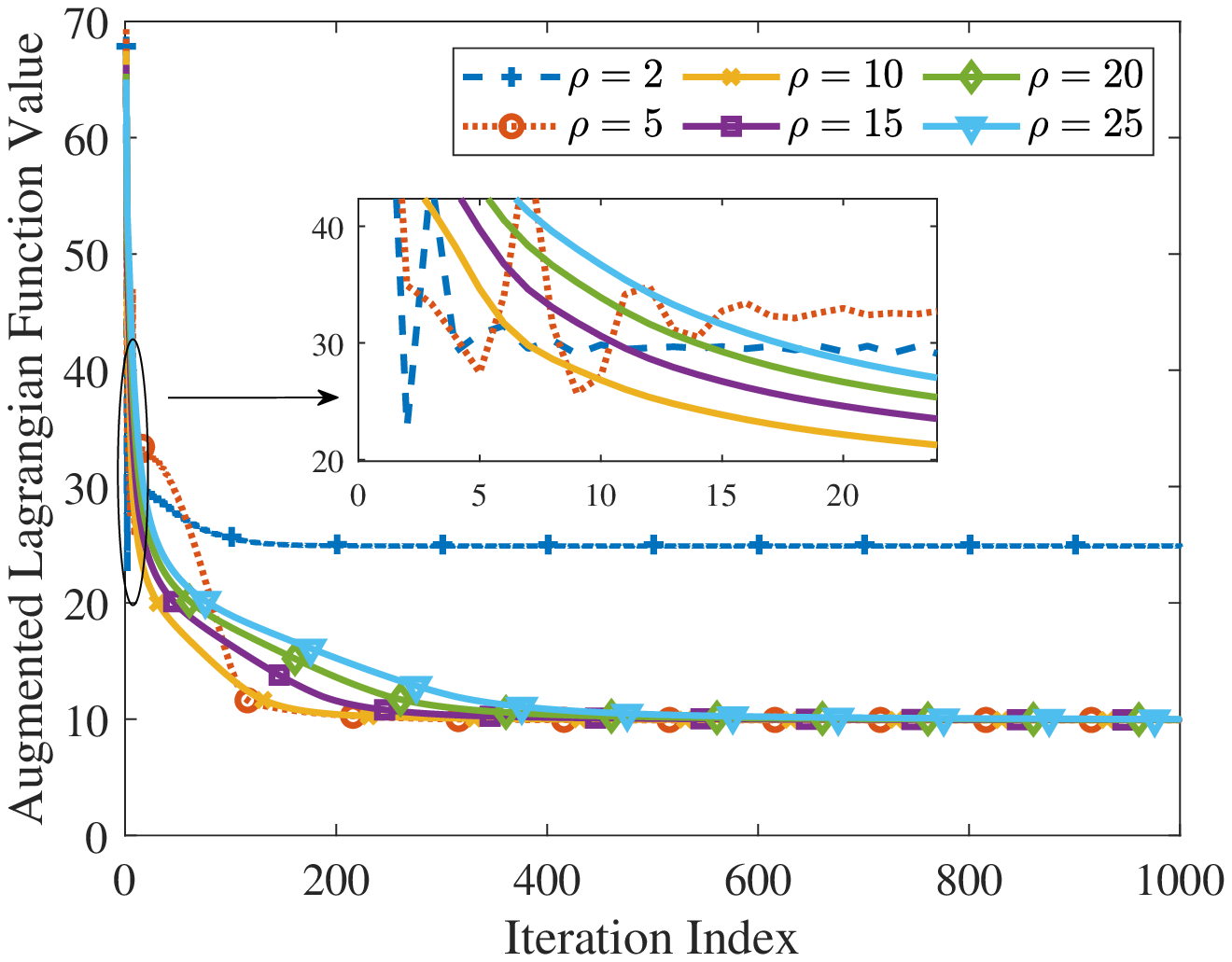}}
	\caption{Function value $\mathcal{L}^{(\!k\!)}$ versus iteration index $k$, with $n = 10$, $m = 5$, and ${\bf A}_{0} \succ 0$.}
	\label{AugLagvsIterIndex}
\end{figure}

\begin{figure}[t]
	\centerline{\includegraphics[width=0.85\textwidth]{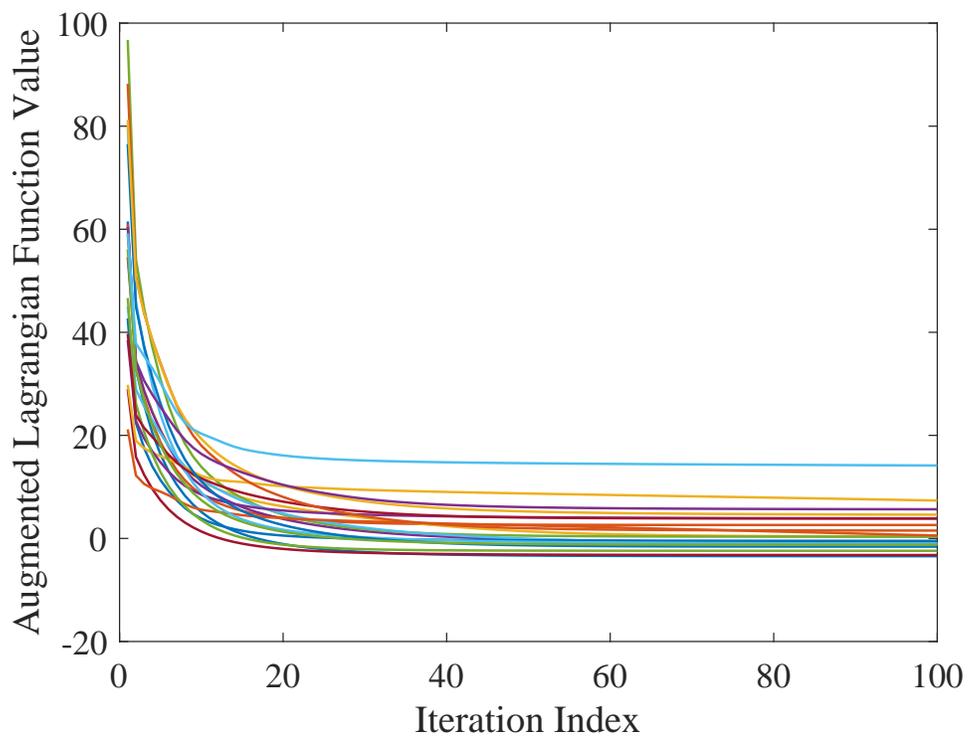}}
	\caption{Function value $\mathcal{L}^{(\!k\!)}$ versus iteration index $k$ in $20$ runs, with $n = 10$, $m = 5$, $\rho = 10$, and ${\bf A}_{0} \succ 0$.}
	\label{AugLagvsIterIndex_MC}
\end{figure}

\begin{figure}[t]
	\centerline{\includegraphics[width=0.85\textwidth]{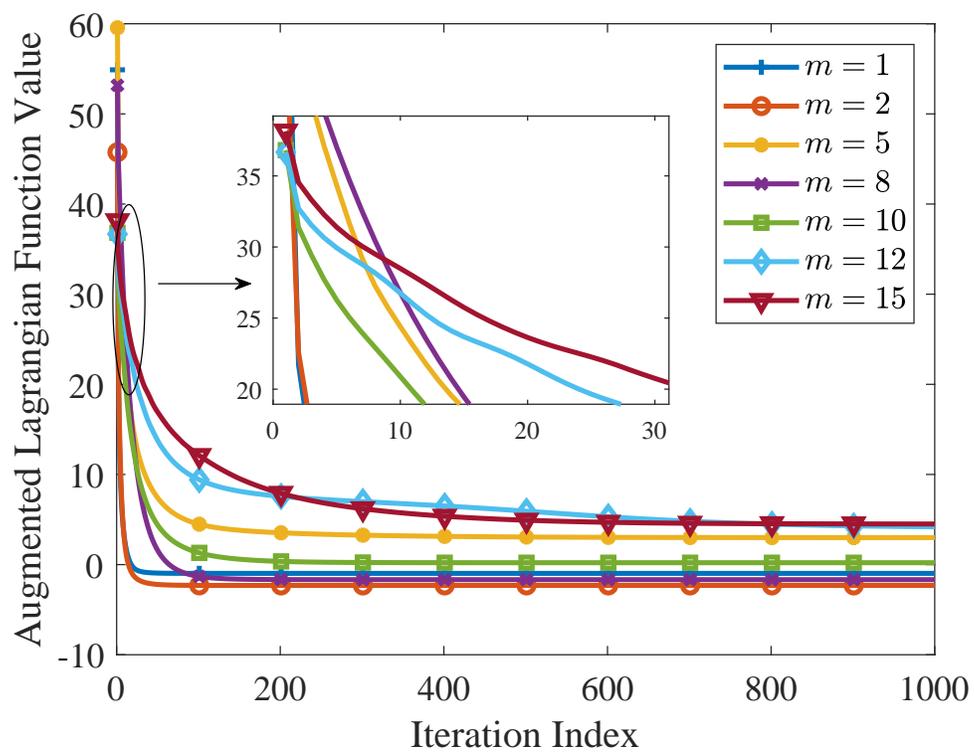}}
	\caption{Function value $\mathcal{L}^{(\!k\!)}$ versus iteration index $k$ for different number of constraints, with $n = 10$, $\rho = 20$, and ${\bf A}_{0} \succ 0$.}
	\label{AugLagvsIterIndex_NumbConstraint}
\end{figure}

\begin{figure}[t]
	\centerline{\includegraphics[width=0.85\textwidth]{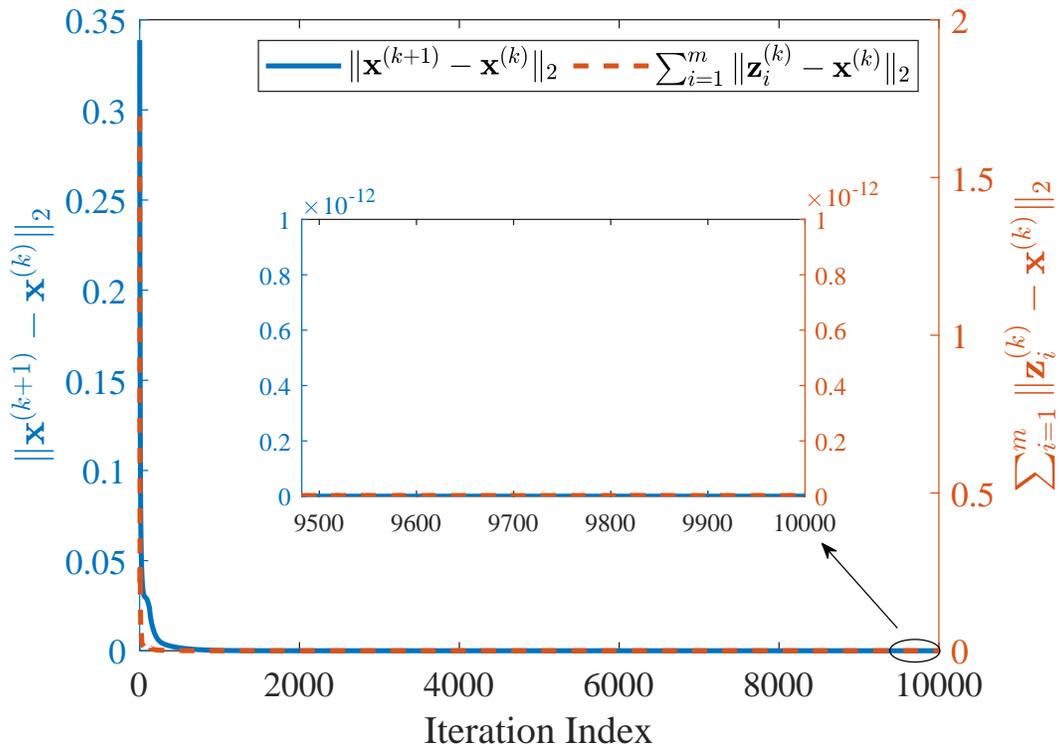}}
	\caption{Distance of variables versus iteration index $k$, with $n = 10$, $m = 5$, $\rho = 10$, and ${\bf A}_{0} \succ 0$.}
	\label{DistancevsIterIndex}
\end{figure}

\end{document}